\newtheorem{lemma}{Lemma}[section]
\newtheorem{theorem}[lemma]{Theorem}
\newtheorem{coro}[lemma]{Corollary}
\newtheorem{prop}[lemma]{Proposition}
\newtheorem{rem}[lemma]{Remark}
\newtheorem{definition}[lemma]{Definition}
\newcommand{\Rdst}{{\mathbb{R}^d}}
\newcommand{\Rtdst}{{\mathbb{R}^{2d}}}
\thanks{M.\ d.\ G.\ was supported by the grant P 23902 from the Austrian
  Science Fund (FWF). K.\ G.\ was
  supported in part by the  project P26273-N25 of the
Austrian Science Fund (FWF).
\mbox{J. L. R.} gratefully acknowledges support from a Marie
Curie fellowship, within the 7th. European 
Community Framework program, under grant PIIF-GA-2012-327063.}
\begin{document}
\author{Maurice A. de Gosson}
\email{maurice.de.gosson@univie.ac.at }
\author{Karlheinz Gr\"ochenig}
\email{karlheinz.groechenig@univie.ac.at}
\author{Jos\'e Luis Romero}
\email{jose.luis.romero@univie.ac.at}
\address{Faculty of Mathematics \\
University of Vienna \\
Oskar-Morgenstern-Platz 1 \\
A-1090 Vienna, Austria}
\title{Stability of Gabor frames under small time Hamiltonian evolutions}

\subjclass[2010]{35Q70, 35Q41, 42C40, 81S10, 81S30}

\keywords{Time-frequency analysis, Gabor frame, Hamiltonian flow,
Schr\"odinger equation, Weyl quantization, Phase space, Wigner distribution}

\begin{abstract}
We consider Hamiltonian deformations of Gabor systems, where the window
evolves according to the action of a Schr\"odinger propagator and the
phase-space nodes evolve according to the corresponding Hamiltonian flow. We
prove the stability of the frame property for small times and Hamiltonians
consisting of a quadratic polynomial plus a potential in the Sj\"ostrand
class with bounded second order derivatives. This answers a question raised
in [de Gosson, M. Symplectic and Hamiltonian Deformations of Gabor Frames.
Appl. Comput. Harmon. Anal. Vol.\ 38 No.2, (2015) p.196--221.]
\end{abstract}

\maketitle

\section{Introduction}

Let $\mathcal{H}(x,p)$ be a Hamiltonian on ${\mathbb{R}^{2d}}$ and
$H:=\mathcal{H}^w$ its Weyl quantization. The solution to the Schr\"odinger
equation 
\begin{align*}
&i \partial_t u(t,\cdot) = H u(t,\cdot), \qquad t\in {\mathbb{R}}, \\
&u(0,\cdot)=f,
\end{align*}
is given by the propagation formula $u(t,\cdot)=e^{-i t H}f$. The model case
is the one of a real quadratic (homogeneous) Hamiltonian: $\mathcal{H}(x,p)
= \left<M (x,p),(x,p)\right> $, with $M \in {\mathbb{R}}^{2d\times 2d}$
symmetric. In this case, the evolution operator $e^{-i t H}$ and the
(symmetric) time-frequency shift operators 
\begin{align}  \label{eq_tfs}
\rho(z) f := e^{-\pi i x\xi} e^{2 \pi i \xi} f(\cdot-x), \qquad z=(x,\xi)
\in {\mathbb{R}^{2d}},
\end{align}
satisfy the symplectic covariance relation 
\begin{align}  \label{eq_cov}
e^{-itH} \rho(z) = \rho(e^{2tJM} z) e^{-itH},
\end{align}
where $J=\big(
\begin{smallmatrix}
  0 & I \\ -I & 0
\end{smallmatrix}\big)
$ is the standard symplectic form (see for example \cite[Chapter 15]{dego11}). Thanks to \eqref{eq_cov}, the action of 
the evolution operator on a state 
$f$, can be understood by considering an expansion into coherent states: 
\begin{align}  \label{eq_exp}
f = \sum_{\lambda \in \Lambda} c_\lambda \rho(\lambda) g,
\end{align}
where $g$ is a smooth, fast decaying function, $\Lambda \subseteq {\mathbb{R}^d}$ is a set of phase-space nodes and 
$c_\lambda \in {\mathbb{C}}$. Such an
expansion is a discrete version of the continuous coherent state
representation \cite{alanga00},  and the canonical choice for $g$ is a
Gaussian function. The evolution generated by the quadratic Hamiltonian $\mathcal{H}$ is then given by 
\begin{align}  \label{eq_ev}
e^{-i t H} f = \sum_{\lambda \in \Lambda} c_\lambda \rho(e^{2tJM} \lambda)
e^{-i t H} g,
\end{align}
and therefore the description of the evolution of an arbitrary state $f$ is
reduced to the one of $g$. If $\mathcal{H}(x,p) = x^2 + p^2$ is the
harmonic oscillator and  $g$ is chosen to
be an adequate Gaussian function, then $e^{-i t H}g=g$, and \eqref{eq_ev}
amounts to a rearrangement of the time-frequency content of $f$. The case of
higher order Hermite functions is also important since these correspond to
higher energy Landau levels (see \cite{MR3203099}).

The collection of coherent states 
\begin{align*}
\mathcal{G}(g, \Lambda) := \big\{ \, \rho(\lambda)g \, : \, \lambda \in
\Lambda \, \big\}
\end{align*}
is called a Gabor system, and it is a \emph{Gabor frame} if every $f \in L^2({\mathbb{R}^d})$ admits an expansion as in 
\eqref{eq_exp} with $\lVert
c\rVert_2 \asymp \lVert f\rVert_2$. In this case, several properties of $f$
can be read from the coefficients $c$. The theory of Gabor frames - also
called Weyl--Heisenberg frames - plays an increasingly important role in
physics; see for instance \cite{MR3317554, ams, dego11} and the references
therein.

Recently one of us started the investigation of the relation between the
theory of Gabor frames and Hamiltonian and quantum mechanics \cite{dego13}
and introduced the notion of a \emph{Hamiltonian deformation of a Gabor
system}. For a (time-independent) Hamiltonian $\mathcal{H}(x,p)$ we let $\Phi_t(x,p)$ be the flow given by the Hamilton 
equations 
\begin{equation*}
\left\{ 
\begin{array}{ll}
\dot{x} & = \mathcal{H}_p(x,p), \\ 
\dot{p} & = -\mathcal{H}_x(x,p),
\end{array}
\right. 
\end{equation*}
and let $H:=\mathcal{H}^w$ be the Weyl quantization of $\mathcal{H}$. Given
a Gabor system $\mathcal{G}(g, \Lambda)$, we consider the time-evolved
systems 
\begin{align}  \label{eq_ham_def}
\mathcal{G}_t(g, \Lambda) := \mathcal{G}(e^{-it H}g, \Phi_t\Lambda),
\end{align}
and investigate the stability of the frame property under the evolution $\mathcal{G}(g, \Lambda) \mapsto 
\mathcal{G}_t(g, \Lambda)$. 

When $\mathcal{H}
$ is a quadratic form $\mathcal{H}(x,p) = \left<M (x,p),(x,p)\right> $, its
flow is given by the linear map $\Phi_t(x,p) = e^{2t JM} (x,p)$ and \eqref{eq_cov} expresses the fact that the evolution 
operator $e^{-i t H}$
is the \emph{metaplectic operator} associated with the linear map $e^{2t JM}$. As a consequence, $\mathcal{G}_t(g, 
\Lambda)$ is the image of $\mathcal{G}(g, \Lambda)$ under the unitary map $e^{-i t H}$ and hence it enjoys the
same spanning properties (in particular, the frame property is preserved).
This observation is called the \emph{symplectic covariance} of Gabor frames 
\cite{dego13}.

For more general Hamiltonians $\mathcal{H}$, no strict covariance property
holds, and the analysis of the deformation $t\mapsto\mathcal{G}_t(g, \Lambda)
$ is difficult. In \cite{dego13}, one of us analyzed a linearized version of
this problem and established some stability estimates (see also \cite{bebuconi15} for a higher-order approximation to 
the deformation
problem).
Based on these results, \cite{dego13} conjectured that  the evolution
$\mathcal{G}(g, \Lambda) \mapsto \mathcal{G}_t(g, \Lambda)$ preserves
the frame property for more general Hamiltonians. In particular, one
would 
expect perturbations of
quadratic Hamiltonians to exhibit a certain approximate symplectic
covariance, in the form of stability of the frame property of $\mathcal{G}_t(g, \Lambda)$ for a certain range of time.

In this article we solve the deformation problem for small times. More
precisely, we consider a perturbation of a quadratic Hamiltonian by an
element of the \emph{Sj\"{o}strand class} $M^{\infty, 1}({\mathbb{R}^{2d}})$ with
bounded second order derivatives. We also consider a Gabor frame with window
in the \emph{Feichtinger algebra} $M^1({\mathbb{R}^d})$ of functions with
integrable Wigner distribution. (See Section \ref{sec_back} and \ref{sec_tf}
for precise definitions). The following is our main result.

\begin{theorem}
\label{th_small_time} Let $a$ be a real-valued, quadratic, homogeneous
polynomial on ${\mathbb{R}^{2d}}$ and let $\sigma \in M^{\infty, 1}({\mathbb{R}^{2d}})\cap C^2({\mathbb{R}^{2d}})$ have 
bounded second order derivatives.
Consider the Hamiltonian $\mathcal{H}(t,x,p) := a(x,p)+\sigma(x,p)$. Let $H
:= \mathcal{H}^w(x,D)$ be the Weyl quantization of $\mathcal{H}$ and let $(\Phi_t)_{t \in {\mathbb{R}}}$ be the flow of 
$\mathcal{H}$.

Let $g \in M^1({\mathbb{R}^d})
$ and $\Lambda \subseteq {\mathbb{R}^{2d}}$, such that
$\mathcal{G}(g,\Lambda)$ is a Gabor frame. Then there
exists $t_0 >0$ such that for all $t \in [-t_0,t_0]$, $\mathcal{G}(e^{-itH}g, \Phi_t(\Lambda))$ is a Gabor frame.
\end{theorem}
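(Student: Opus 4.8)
The plan is to recast the statement as a stability result and to show that, near $t=0$, both the window and the node set of $\mathcal{G}(e^{-itH}g,\Phi_t\Lambda)$ depend continuously on $t$ in the topologies for which the frame property of Gabor systems with $M^1$ window is stable. I would therefore treat the evolution of the window and the evolution of the nodes separately, and then combine the two through a single deformation argument.

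For the window, I would prove that $t\mapsto e^{-itH}g$ is continuous from $\Rst$ into $M^1(\Rdst)$, so that $\norm{e^{-itH}g-g}_{M^1}\to 0$ as $t\to 0$. Writing $H=H_0+\sigma^w$ with $H_0:=a^w$, I would pass to the interaction picture $e^{-itH}=\mu(S_t)\,U(t)$, where $\mu(S_t)=e^{-itH_0}$ is the metaplectic operator attached to $S_t=e^{2tJM}$ by the covariance \eqref{eq_cov}, and $U(t)$ solves $i\dot U(t)=(\sigma\circ S_t)^w\,U(t)$, $U(0)=I$. Here the hypotheses on $\sigma$ are decisive: since composition with the linear maps $S_t$ leaves the Sj\"ostrand class invariant with norms that are uniformly bounded for $t\in[-t_0,t_0]$, each symbol $\sigma\circ S_t$ lies in $M^{\infty,1}(\Rtdst)$ with a uniform bound, and the corresponding Weyl operators are bounded on $M^1$ with uniformly controlled norm. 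The Dyson series for $U(t)$ then converges in the operator norm on $M^1$, giving $\norm{U(t)-I}_{M^1\to M^1}=O(t)$, while $\mu(S_t)\to I$ strongly on $M^1$ as $t\to 0$ because metaplectic operators depend continuously on their symplectic matrix. Combining these yields the desired $M^1$-continuity of the window. Equivalently, one may invoke that $e^{-itH}$ is a generalized metaplectic, or Fourier integral, operator of Sj\"ostrand type, hence bounded on $M^1$.

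For the nodes, I would analyze the Hamiltonian flow $\Phi_t$ of $\mathcal{H}=a+\sigma$. Because $\mathcal{H}$ has bounded second derivatives, its Hamiltonian vector field is globally Lipschitz, so $\Phi_t$ is a bi-Lipschitz diffeomorphism of $\Rtdst$ for every $t$, with $\Phi_0=\mathrm{id}$. A Gronwall estimate then shows that $\Phi_t\to\mathrm{id}$ as $t\to 0$ uniformly on $\Rtdst$ together with the first derivatives, which is precisely the Lipschitz convergence $\Phi_t\lipconv\mathrm{id}$ required by the deformation theory for Gabor frames.

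Finally, I would invoke the stability of the frame property under deformations: since $\mathcal{G}(g,\Lambda)$ is a frame with $g\in M^1(\Rdst)$, a sufficiently small Lipschitz deformation of the nodes together with a sufficiently small $M^1$-perturbation of the window keeps the system a frame, with uniform frame bounds. A short sequential argument (if no such $t_0$ existed, one could extract $t_n\to 0$ along which the deformed systems fail to be frames, contradicting the deformation theorem applied to $e^{-it_nH}g\to g$ in $M^1$ and $\Phi_{t_n}\lipconv\mathrm{id}$) then produces the threshold $t_0>0$. I expect the genuine difficulty to lie in the window step: controlling the propagator of the \emph{unbounded} operator $H$ on $M^1$, rather than merely on $L^2$, is what forces the combination of the metaplectic representation of the quadratic part with the fine time-frequency mapping properties of Sj\"ostrand-class operators; by contrast, the flow estimate and the final invocation of the deformation theorem are comparatively soft once the correct notion of convergence is in place.
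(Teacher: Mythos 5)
Your proposal follows the paper's proof step for step: (i) $M^1$-continuity of $t\mapsto e^{-itH}g$, (ii) Lipschitz convergence of the deformed nodes via Gronwall, and (iii) a combined stability statement for simultaneous $M^1$-perturbation of the window and Lipschitz deformation of the nodes, finished by a sequential extraction of $t_0$. For (i) the paper simply cites \cite[Theorems 1.5 and 4.1]{cogrniro13} (Theorem \ref{th_ham_win}); your interaction-picture/Dyson-series sketch is in essence the proof of that cited result (the ``generalized metaplectic operator'' framework you allude to), and its ingredients --- symplectic covariance of the Weyl calculus, invariance of $M^{\infty,1}(\Rtdst)$ under linear changes of variables with locally uniform norms, $M^1$-boundedness of Sj\"ostrand-class operators, strong continuity of the metaplectic group on $M^1$ --- are all correct, though each is itself a nontrivial theorem that must be cited. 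For (iii), note that the deformation theorem of \cite{grorro15} (Theorem \ref{th_def_frames}) keeps the window \emph{fixed}; the simultaneous-perturbation statement you invoke is the paper's Corollary \ref{coro_stabil}, which needs a short extra argument combining Theorem \ref{th_def_frames} with the sampling inequality of Proposition \ref{prop_win_stab} and the uniform bound on $\mathop{\mathrm{rel}}(\Lambda_n)$ from Lemma \ref{lemma_sep}. That argument is soft, as you predict, but it is a step to be carried out, not a bare citation.

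The one outright error is in (ii): your claim that $\Phi_t\to\mathrm{id}$ \emph{uniformly on} $\Rtdst$ as $t\to 0$ is false whenever $a\neq 0$. The Hamiltonian vector field $F=(\partial_p\mathcal{H},-\partial_x\mathcal{H})$ is Lipschitz but unbounded (it grows linearly, since $\nabla a$ is linear), so $\sup_z\left|\Phi_t(z)-z\right|$ is in general infinite for every $t\neq 0$: for the harmonic oscillator, $\Phi_t$ is a rotation by an angle proportional to $t$, and $\left|\Phi_t(z)-z\right|$ grows linearly in $\left|z\right|$. Fortunately, uniform convergence of the map is not what Lipschitz convergence asks for: Definition \ref{def_lip} requires only pointwise convergence $\tau_n(\lambda)\to\lambda$ together with control of the \emph{relative} displacements $(\tau_n(\lambda)-\tau_n(\lambda'))-(\lambda-\lambda')$ in (L1), plus the inverse condition (L2). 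This relative control is exactly what your Gronwall estimate delivers --- equivalently, it follows from the uniform convergence $D\Phi_t\to I$, which is the true half of your claim --- and (L2) follows from the bi-Lipschitz bounds \eqref{eq_dist}, just as in the paper's Theorem \ref{th_lip_flow}, where Gronwall applied to $g(t)=\left|\Phi_t(\lambda)-\Phi_t(\lambda')-(\lambda-\lambda')\right|$ gives the bound $L\left|t\right|\left|\lambda-\lambda'\right|e^{L\left|t\right|}$. So your argument survives once restated in terms of relative displacements; with that correction, the proposal is a correct proof along the same lines as the paper's.
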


To see what is at stake, we consider once more the symplectic
covariance property~\eqref{eq_cov}. It links the classical Hamiltonian flow
$e^{2tJM}$ on phase space to the quantum mechanical evolution. If
$\mathcal{H}$ is not quadratic, then the flow $\Phi _t$ is no longer
linear, and, in general,   there is no explicit and exact formula for the quantum
mechanical evolution. We therefore have to understand   the classical
evolution of the set $\Lambda $ under $\Phi _t$ separately from the
quantum mechanical evolution of the state (window) $g$ under
$e^{-itH}$. 

The stability of the frame property of $\mathcal{G}(g, \Phi _t(\Lambda
))$ is part of  the deformation  theory  of Gabor frames. 
While there is a significant literature on the
stability of Gabor frames under linear distortions of the time-frequency
nodes $\Lambda$ (covering perturbation of lattice parameters \cite{be94,
feka04} on the one hand, and general point sets \cite{asfeka13}), only
recently  a fully non-linear deformation theory of Gabor systems was
developed in \cite{grorro15}. It turns out that the concept of
Lipschitz deformation  is precisely the right tool to treat non-linear
Hamiltonian flows, and we will use the main result of
\cite{grorro15} in a decisive manner. 

The second ingredient in Theorem~\ref{th_small_time} is the assumption
$g\in M^1(\mathbb{R}^d)$. This is an essential assumption for Gabor
frames to be useful in phase space analysis. In particular, most
stability results for  Gabor frames under perturbations of the window require
that $g\in M^1(\mathbb{R}^d)$. Outside $M^1$ one encounters quickly
pathologies~\cite{rosu15}. In regard to our problem it is therefore
important to understand whether $M^1(\mathbb{R}^d)$ is invariant under
the evolution of the Schr\"odinger equation. This is indeed the case
for certain classes of Hamiltonians~\cite{BGOR07,cogrniro13}, and will be the second
important tool used to prove Theorem~\ref{th_small_time}.

The rest of the article is organized as follows. In Section \ref{sec_back}
we provide some definitions and background results. Section \ref{sec_tools}
collects the essential tools and derives some auxiliary estimates. Finally,
the proof of Theorem \ref{th_small_time} is presented in Section \ref{sec_de}.

\section{Background}

\label{sec_back}

\subsection{Time-frequency analysis}

\label{sec_tf} Given a function $g \in L^2({\mathbb{R}^d})$, with $\lVert
g\rVert_2=1$, the \emph{short-time Fourier transform} of a function $f \in
L^2({\mathbb{R}^d})$ with respect to the window $g$ is defined as 
\begin{align}  \label{eq_def_stft}
V_g f(x,\xi) := \left<f,e^{\pi i x\xi} \rho(x,\xi) g\right> , \qquad (x,\xi)
\in {\mathbb{R}^{2d}},
\end{align}
where $\rho(x,\xi)$ is the (symmetric) time-frequency shift defined in \eqref{eq_tfs}. The function $g$ is often called 
window and the
normalization $\lVert g\rVert_2=1$ implies that 
\begin{align}  \label{eq_stft_l2}
\lVert V_g f\rVert_{{L^2({\mathbb{R}^{2d}})}}=\lVert f\rVert_{{L^2({\mathbb{R}^d})}}, \qquad f \in 
{L^2({\mathbb{R}^d})}.
\end{align}
The standard choice for $g$ is the Gaussian $\phi(x) := 2^{d/4}
e^{-\pi\left| x \right|  ^2}$. Analogously, the Feichtinger algebra,
originally introduced in \cite{fe81-2},
is defined to be 
\begin{align*}
M^1({\mathbb{R}^d}) := \big\{ \, f \in L^2({\mathbb{R}^d}) \, : \, \lVert
f\rVert_{M^1} := \lVert V_\phi f\rVert_{L^1({\mathbb{R}^{2d}})} < +\infty \, \big\},
\end{align*}
and is  used as a standard reservoir for windows $g$. Equivalently, $f \in M^1(\mathbb{R}^d)$, if the Wigner 
distribution $Wf(x,\xi ) = \int f(x+t/2) \overline{f(x-t/2)} e^{-2\pi i \xi \cdot t} \,
dt$ of $f$ is integrable on $\mathbb{R}^{2d}$. When $g \in M^1({\mathbb{R}^d})$, the map $f \mapsto V_g f$ can be 
extended beyond $L^2({\mathbb{R}^d})$.

We define the \emph{modulation spaces} as follows: fix a non-zero $g\in 
\mathcal{S} ({\mathbb{R}^d} )$ and let $1 \leq p,q \leq \infty$. Then  $M^{p,q}({\mathbb{R}^d})$ is  the class of all 
distributions $f \in \mathcal{S}^{\prime
}({\mathbb{R}^d})$ such that 
\begin{align}  \label{eq_def_mp}
\lVert f\rVert_{M^{p,q}({\mathbb{R}^d})} := \left( \int_{{\mathbb{R}^d}}
\left( \int_{{\mathbb{R}^d}} \left| V_g f(x,\xi) \right|  ^p dx
\right)^{q/p} d\xi \right)^{1/q} < \infty,
\end{align}
with the usual modification when $p$ or $q$ is $\infty$. Different choices
of non-zero windows $g \in \mathcal{S}({\mathbb{R}^d})$ yield the same space
with equivalent norms, see \cite{fe06} and \cite[Chapter 11]{gr01}. In addition, for $g \in M^1({\mathbb{R}^d})$, the short-time Fourier transform is well-defined on
all $M^{p,q}({\mathbb{R}^d})$. Originally introduced by Feichtinger in \cite{fe83-4}, modulation spaces combine smoothness and integrability conditions. In this article, we will be mainly concerned with Feichtinger's algebra $M^1(\Rdst)$, as a window class for Gabor systems, and $M^{\infty,1}(\Rtdst)$ - also known as Sj\"ostrand's class, as a symbol class for pseudodifferential operators.

\subsection{Sampling the short-time Fourier transform}

A set $\Lambda \subseteq {\mathbb{R}^d}$ is called \emph{relatively separated} if 
\begin{align}  \label{eq_rel}
\mathop{\mathrm{rel}}(\Lambda) := \sup \{ \#(\Lambda \cap (\{x\}+[0,1]^d)) :
x \in {\ \mathbb{R}^d} \} < \infty.
\end{align}
The assumption that $g \in M^1({\mathbb{R}^d})$ implies certain sampling
estimates for the short-time Fourier transform. We quote the following
standard result (see for example \cite[Chapter 13]{gr01}.)

\begin{prop}
\label{prop_win_stab} Let $g \in M^1({\mathbb{R}^d})$ and let $\Lambda
\subseteq {\mathbb{R}^{2d}}$. Then 
\begin{align*}
\left(\sum_{\lambda \in \Lambda} \left| V_g f(\lambda) \right| 
^2\right)^{1/2} \leq C \mathop{\mathrm{rel}}(\Lambda) \lVert g\rVert_{M^1}
\lVert f\rVert_2, \qquad f \in {L^2({\mathbb{R}^d})},
\end{align*}
where the constant $C$    depends  only on the dimension $d$.
\end{prop}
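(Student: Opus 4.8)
The plan is to reduce the sampling inequality to two standard ingredients: a pointwise domination of $|V_g f|$ by a convolution, and the control of samples of a function by its local suprema over a unit-cube partition of phase space. Throughout, $\phi$ denotes the $L^2$-normalized Gaussian, and I would work in the Wiener amalgam space $W(C_0,L^2)$ of continuous functions $H$ on $\mathbb{R}^{2d}$ with
\[
\norm{H}_{W(C_0,L^2)}^2 := \sum_{n \in \mathbb{Z}^{2d}} \Big(\sup_{z \in n+[0,1]^{2d}} |H(z)|\Big)^2 < \infty,
\]
together with its companion $\wtest$.

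First I would recall the change-of-window estimate for the short-time Fourier transform (see \cite[Chapter 11]{gr01}): since $\norm{\phi}_2=1$, one has $|V_g f(z)| \le \big(|V_\phi f| * |\widetilde{V_\phi g}|\big)(z)$ for all $z \in \mathbb{R}^{2d}$, where $\widetilde{h}(z)=h(-z)$. Here $|V_\phi f| \in L^2(\mathbb{R}^{2d})$ with $\norm{V_\phi f}_2 = \norm{f}_2$ by \eqref{eq_stft_l2}, while $|V_\phi g| \in L^1(\mathbb{R}^{2d})$ with $\norm{V_\phi g}_1 = \norm{g}_{M^1}$ by definition. The crucial refinement is that $V_\phi g$ actually lies in $\wtest$, with $\norm{V_\phi g}_{\wtest} \le C\norm{g}_{M^1}$; this is the standard fact that the reproducing relation $|V_\phi g| \le |V_\phi g| * |V_\phi \phi|$, combined with the rapid decay of the Gaussian ambiguity function $V_\phi \phi \in \wtest$, upgrades mere $L^1$-membership of an STFT to $\wtest$-membership.

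Next I would show $V_g f \in W(C_0,L^2)$ with $\norm{V_g f}_{W(C_0,L^2)} \le C\norm{g}_{M^1}\norm{f}_2$, using the displayed convolution bound and the amalgam convolution relation $L^2 * \wtest \hookrightarrow W(C_0,L^2)$. Concretely, setting $a_m := \sup_{w \in m+[0,1]^{2d}} |V_\phi g(w)|$ (so that $(a_m)\in\ell^1$ with $\norm{(a_m)}_{\ell^1} \asymp \norm{g}_{M^1}$) and $b_k := \norm{V_\phi f}_{L^2(k+[0,1]^{2d})}$ (so that $(b_k)\in\ell^2$ with $\norm{(b_k)}_{\ell^2}=\norm{f}_2$), a cube-by-cube estimate of the convolution followed by Cauchy--Schwarz bounds $\sup_{z\in n+[0,1]^{2d}}|V_g f(z)|$ by a discrete convolution of $(a_m)$ with a finite neighbour-average of $(b_k)$; Young's inequality $\ell^1 * \ell^2 \hookrightarrow \ell^2$ then gives the asserted bound, with constant depending only on $d$ through the comparability constants and the number of neighbouring cubes.

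Finally I would carry out the elementary sampling step: partitioning $\mathbb{R}^{2d}$ into the cubes $Q_n = n+[0,1)^{2d}$, $n\in\mathbb{Z}^{2d}$, each $Q_n$ contains at most $\rel(\Lambda)$ points of $\Lambda$, whence
\[
\sum_{\lambda \in \Lambda} |V_g f(\lambda)|^2 \le \rel(\Lambda) \sum_{n \in \mathbb{Z}^{2d}} \Big(\sup_{z \in Q_n} |V_g f(z)|\Big)^2 = \rel(\Lambda)\, \norm{V_g f}_{W(C_0,L^2)}^2 .
\]
Taking square roots and using $\rel(\Lambda)^{1/2} \le \rel(\Lambda)$ (trivially if $\Lambda=\emptyset$, and otherwise since $\rel(\Lambda)\ge 1$) yields the claim. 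I expect the only real obstacle to be the second step, namely the $\wtest$-upgrade with the correct norm: one must ensure that the \emph{local suprema} of $V_\phi g$ are summable, not merely that $V_\phi g\in L^1$, which is exactly where the smoothing by the Gaussian ambiguity function enters; the pointwise domination is textbook and the final counting argument is routine.
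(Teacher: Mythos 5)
Your proof is correct, and it coincides with the argument the paper implicitly relies on: the paper gives no proof of Proposition \ref{prop_win_stab}, quoting it as standard and citing \cite[Chapter 13]{gr01}, where exactly this amalgam-space reasoning (change of window, the upgrade $V_\phi g \in \wtest$ with $\norm{V_\phi g}_{\wtest} \leq C \norm{g}_{M^1}$, the convolution relation $L^2 * \wtest \hookrightarrow W(C_0,L^2)$, and the cube-counting step with $\rel(\Lambda)$) is carried out. All steps, including the handling of $\rel(\Lambda)^{1/2} \leq \rel(\Lambda)$, are sound, so there is nothing to correct.
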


\subsection{Gabor frames}

Given a window $g \in M^1({\mathbb{R}^d})$ and a relatively separated set $\Lambda \subseteq {\mathbb{R}^{2d}}$, the 
collection of functions 
\begin{align*}
\mathcal{G}(g,\Lambda) := \left \{ \rho(\lambda)g : \lambda \in \Lambda
\right \} 
\end{align*}
is called the \emph{Gabor system} generated by $g$ and $\Lambda$. It is a
Gabor frame, if there exist constants $A,B>0$ such that 
\begin{align}  \label{eq_frame}
A \lVert f\rVert_2^2 \leq \sum_{\lambda \in \Lambda} \left|
\left<f,\rho(\lambda)g\right>  \right|  ^2 \leq B \lVert f\rVert_2^2,\qquad
f \in L^2({\mathbb{R}^d}).
\end{align}
The constants $A,B$ are called frame bounds for $\mathcal{G}(g,\Lambda)$. We
remark that the definition of Gabor system given here is slightly
non-standard. In signal processing, it is more common to define the
time-frequency shifts by 
\begin{align*}
\pi(z)f(t) := e^{2\pi i \xi t} f(t-x), \qquad z=(x,\xi) \in {\mathbb{R}^d}\times {\mathbb{R}^d}, t \in {\mathbb{R}^d}.
\end{align*}
Since $\pi(x,\xi) = e^{\pi i x\xi} \rho(x,\xi)$, the choice $\rho$ has no
impact on the frame inequality in \eqref{eq_frame}. Note that the sum
in \eqref{eq_frame} is the same as $\|V_gf |_\Lambda \|_2^2$.  The use
of $\rho$ instead of $\pi $  in
this article is motivated by the symplectic covariance property in \eqref{eq_cov}, which would require additional phase 
factors if $\pi$ was
used instead of $\rho$.

The following basic fact can be found for example in \cite[Theorem 1.1]{chdehe99}.

\begin{lemma}
\label{lemma_rel_sep} If $\mathcal{G}(g,\Lambda)$ is a frame, then $\Lambda$
is relatively separated.
\end{lemma}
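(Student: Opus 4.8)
The plan is to use only the upper frame bound from \eqref{eq_frame}, combined with the continuity of the short-time Fourier transform and its covariance under \tfs s. First I would translate the right-hand inequality of \eqref{eq_frame} into a statement about $V_g$. Since \eqref{eq_def_stft} gives $\abs{\ip{f}{\rho(\lambda)g}} = \abs{V_g f(\lambda)}$, the upper bound reads
\[
\sum_{\lambda \in \Lambda} \abs{V_g f(\lambda)}^2 \le B\,\norm{f}_2^2, \qquad f \in \ltrd .
\]

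The key step is to feed into this inequality the shifted windows $f = \rho(z)g$, for $z \in \rdd$, which reduces everything to the single function $V_g g$. Using the composition law $\rho(\lambda)^{-1}\rho(z) = c(\lambda,z)\,\rho(z-\lambda)$ with a unimodular factor $c(\lambda,z)$ (the precise symplectic phase is irrelevant here, as we only take absolute values), one obtains $\abs{V_g(\rho(z)g)(\lambda)} = \abs{V_g g(z-\lambda)}$. Because $\norm{\rho(z)g}_2 = \norm{g}_2$, the displayed bound becomes
\[
\sum_{\lambda \in \Lambda} \abs{V_g g(z-\lambda)}^2 \le B\,\norm{g}_2^2, \qquad \text{for every } z \in \rdd .
\]

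Now I would localize. For $g \in \ltrd$ the map $V_g g$ is continuous on $\rdd$ and $V_g g(0) = \norm{g}_2^2 > 0$, so there is a ball $U$ centered at the origin (which we may take symmetric) and a constant $c>0$ with $\abs{V_g g(w)} \ge c$ for all $w \in U$. Restricting the sum above to those $\lambda$ with $z-\lambda \in U$, i.e.\ $\lambda \in z+U$, yields $c^2\,\#(\Lambda \cap (z+U)) \le B\,\norm{g}_2^2$, so that $\Lambda$ meets every translate $z+U$ in at most $B\norm{g}_2^2/c^2$ points, with a bound uniform in $z$.

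Finally, since $U$ is a ball of fixed radius, any unit cube $\{x\}+[0,1]^{2d}$ can be covered by a fixed finite number $N$ of translates of $U$, with $N$ depending only on the radius of $U$ (hence only on $g$ and $d$) and not on $x$. Summing the uniform estimate over these translates gives $\#(\Lambda \cap (\{x\}+[0,1]^{2d})) \le N\,B\norm{g}_2^2/c^2$ for all $x$, which is precisely $\rel(\Lambda) < \infty$. The conceptual crux is the covariance reduction to $V_g g$; the only point requiring mild care is the uniformity of the covering number $N$, which follows from translation invariance. Notably, the argument needs only $g \in \ltrd$, not the stronger assumption $g \in M^1(\rd)$.
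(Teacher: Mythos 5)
Your proof is correct. The paper does not actually prove Lemma \ref{lemma_rel_sep} itself --- it quotes it from \cite[Theorem 1.1]{chdehe99} --- and your argument (inserting $f=\rho(z)g$ into the upper frame bound, using the covariance $\abs{V_g(\rho(z)g)(\lambda)}=\abs{V_gg(z-\lambda)}$ and the continuity of $V_gg$ at the origin to bound $\#(\Lambda\cap(z+U))$ uniformly, then covering a unit cube by finitely many translates of $U$) is exactly the standard proof from that reference, including the correct observation that only the Bessel bound and $g\in L^2(\Rdst)$, not $g\in M^1(\Rdst)$, are needed.
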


\section{The essential tools}

\label{sec_tools}

\subsection{Schr\"{o}dinger operators on modulation spaces}

\label{sec_sch} The \emph{Weyl transform} of a distribution $\sigma \in 
\mathcal{S}^{\prime }({\mathbb{R}^d} \times {\mathbb{R}^d})$ is an operator $\sigma^w$ that is formally defined on 
functions $f:{\mathbb{R}^d} \to {\mathbb{C}}$ as 
\begin{align*}
\sigma^w (f)(x) := \int_{{\mathbb{R}^d} \times {\mathbb{R}^d}} \sigma\left(\frac{x+y}{2},\xi\right) e^{2\pi i(x-y)\xi} 
f(y) dy d\xi, \qquad x \in {\mathbb{R}^d}.
\end{align*}
The fundamental results in the theory of pseudodifferential operators
provide conditions on $\sigma$ for the operator $\sigma^w$ to be
well-defined and bounded on various function spaces. In particular,
Sj\"ostrand proved that if $\sigma \in M^{\infty, 1}({\mathbb{R}^{2d}})$,
then $\sigma^w$ is bounded on $L^2({\mathbb{R}^d})$ \cite{sj94,sj95}. See
also \cite{gr06-3,gr06} for extensions of these results to weighted symbol
classes and modulations spaces.

The following result is one of our main tools. It shows that
perturbing a quadratic Hamiltonian with a potential in the Sj\"ostrand's class $M^{\infty, 1}({\mathbb{R}^{2d}})$ gives rise to propagators that 
are strongly continuous on $M^1({\mathbb{R}^d})$.

\begin{theorem}[{\protect\cite[Theorems 1.5 and 4.1]{cogrniro13}}]
\label{th_ham_win} Let $a$ be a real-valued, quadratic, homogeneous
polynomial on ${\mathbb{R}^{2d}}$ and let $\sigma \in M^{\infty, 1}({\mathbb{\ R}^{2d}})$. Let $H := a^w(x,D) + 
\sigma^w(x,D)$. Then $e^{i t H}$ is a
strongly continuous one-parameter group of operators on $M^1({\mathbb{R}^d})$. In other words:

\begin{itemize}
\item[(a)] for all $t \in {\mathbb{R}}$, $e^{i t H}: M^1({\mathbb{R}^d}) \to
M^1({\mathbb{R}^d})$,

\item[(b)] for each $g \in M^1({\mathbb{R}^d})$, 
\begin{align}
e^{i t H} g \longrightarrow g \mbox{ in }M^1({\mathbb{R}^d}), \mbox{ as } t
\longrightarrow 0.
\end{align}
\end{itemize}
\end{theorem}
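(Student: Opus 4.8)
The plan is to write $H = A + B$ with $A := a^w(x,D)$ the quadratic part and $B := \sigma^w(x,D)$ the Sj\"ostrand-class part, and to treat $B$ as a perturbation of the metaplectic evolution generated by $A$. Since $a$ and $\sigma$ are real-valued and $B$ is bounded on $L^2({\mathbb{R}^d})$ (Sj\"ostrand's theorem), $H$ is self-adjoint on the domain of $A$ by the Kato--Rellich theorem, so $e^{itH}$ is already a strongly continuous unitary group on $L^2({\mathbb{R}^d})$ by Stone's theorem. The whole task is therefore to upgrade this $L^2$-group to a bounded, strongly continuous group on $M^1({\mathbb{R}^d})$: the group law and the $L^2$-continuity may be taken for granted and simply restricted to $M^1$ once boundedness there has been established.

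First I would dispose of the quadratic part. For real quadratic $a$ the propagator $e^{itA}$ is, up to a phase, the metaplectic operator $\mu(S_t)$ attached to the linear symplectic flow $S_t = e^{2tJM_a}$; this is precisely relation \eqref{eq_cov}. Metaplectic operators are bounded on $M^1({\mathbb{R}^d})$ with norm depending continuously on the symplectic matrix, and $t\mapsto\mu(S_t)$ is strongly continuous on $M^1$, both facts following from the symplectic covariance of the short-time Fourier transform. This settles the theorem when $\sigma=0$ and furnishes the reference evolution for the perturbation.

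Next I would pass to the interaction picture: set $V(t) := e^{-itA}e^{itH}$, which solves $\dot V(t) = iB(t)V(t)$, $V(0)=I$, with $B(t) := e^{-itA}Be^{itA}$. By the symplectic covariance of the Weyl calculus, $\mu(S)\sigma^w\mu(S)^{-1} = (\sigma\circ S^{-1})^w$, so $B(t) = (\sigma\circ R_t)^w$ for a linear symplectic matrix $R_t$ with $R_0=I$ depending smoothly on $t$. The solution is then the time-ordered (Dyson) series $V(t) = \sum_{n\ge 0} i^n \int_{0\le s_n\le\cdots\le s_1\le t} B(s_1)\cdots B(s_n)\,ds$. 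To run this on $M^1$ I would use that a symbol in $M^{\infty,1}({\mathbb{R}^{2d}})$ produces an operator bounded on $M^1({\mathbb{R}^d})$, so each factor $B(s)$ is bounded on $M^1$; by submultiplicativity of operator norms the $n$-th term is dominated by $C_T^n|t|^n/n!$ with $C_T := \sup_{|s|\le T}\norm{B(s)}_{\mathcal{B}(M^1)}$, and the series converges in $\mathcal{B}(M^1)$ to a solution of the corresponding integral equation. Comparing with the $L^2$-solution (which equals $e^{-itA}e^{itH}$) by uniqueness of the linear ODE identifies $V(t) = e^{-itA}e^{itH}$ on $M^1\subseteq L^2$, whence $e^{itH} = e^{itA}V(t)$ maps $M^1$ into $M^1$; this is part (a).

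The crux, hidden in the constant $C_T$, is the claim $\sup_{|s|\le T}\norm{\sigma\circ R_s}_{M^{\infty,1}}<\infty$, and this is where I expect the main difficulty: modulation spaces are in general not invariant under linear changes of variables, and what rescues the argument is the special invariance of the Sj\"ostrand class under the symplectic group. I would prove it at the level of the STFT, writing the phase-space transform of $\sigma\circ R$ in terms of that of $\sigma$ with the Gaussian window $\gamma$ replaced by $\gamma\circ R^{-1}$ (again a nondegenerate Gaussian, with $|\det R|=1$); window-independence of the $M^{\infty,1}$-norm together with continuity in $R$ then gives $\norm{\sigma\circ R}_{M^{\infty,1}}\le C(R)\norm{\sigma}_{M^{\infty,1}}$ with $C$ locally bounded, hence the uniform bound over compact time intervals. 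Finally, for the strong continuity (b) I would combine $\norm{V(t)-I}_{\mathcal{B}(M^1)}\le C_T|t|\,e^{C_T|t|}\to 0$ from the series with the metaplectic strong continuity $e^{itA}g\to g$ in $M^1$, via $\norm{e^{itH}g-g}_{M^1}\le \norm{e^{itA}}_{\mathcal{B}(M^1)}\norm{(V(t)-I)g}_{M^1} + \norm{(e^{itA}-I)g}_{M^1}$; strong continuity at every $t$ then follows from the group law.
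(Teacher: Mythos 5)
This theorem is imported: the paper states it as a quotation of Theorems 1.5 and 4.1 of \cite{cogrniro13} and contains no proof of its own. Your argument is essentially the proof given in that reference --- a Duhamel/Dyson expansion of $e^{itH}$ around the metaplectic propagator $e^{ita^w}$ (your interaction picture $V(t)=e^{-itA}e^{itH}$ is exactly their factorization $e^{itH}=\mu(\chi_t)\,b_t^w$ of the propagator as a generalized metaplectic operator), resting on the same two pillars: invariance of $M^{\infty,1}({\mathbb{R}^{2d}})$ under composition with linear symplectic maps with locally uniform constants, and boundedness on $M^1({\mathbb{R}^d})$ of Weyl operators with Sj\"ostrand symbols.
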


\subsection{Deformation of Gabor frames}

Our second essential tool is a description of the stability of the frame
property of a Gabor frame $\mathcal{G}(g,\Lambda)$ under small deformations
of $\Lambda$. Our general assumption is that $g \in M^1({\mathbb{R}^d})$.
(Without this assumption the frame property might be very unstable under
perturbation of $\Lambda$, even for lattices \cite{ja03-1,rosu15}).

The classical results in signal processing describe the stability of the
frame property under the so-called jitter perturbations: if $\mathcal{G}(g,\Lambda)$ $\sup _{\lambda \in \Lambda } \inf 
_{\lambda ^{\prime }\in
\Lambda ^{\prime }} |\lambda - \lambda ^{\prime }|<\epsilon$ and $\sup
_{\lambda^{\prime }\in \Lambda^{\prime }} \inf _{\lambda \in \Lambda}
|\lambda - \lambda ^{\prime }| < \epsilon $, for sufficiently small $\varepsilon$, then $\mathcal{G} (g, \Lambda 
^{\prime })$ is also a frame. A
much deeper property is the stability of the frame condition under linear
maps $\Lambda \mapsto A \Lambda$, where $A$ is a matrix that is sufficiently
close to the identity (but possibly not symplectic!). Such results have been
derived first for lattices \cite{be94, feka04} and then for general sets 
\cite{asfeka13}. In order to deal with Hamiltonian flows, we will resort to
a recent fully non-linear stability theory \cite{grorro15}.

Let $\Lambda \subseteq {\mathbb{R}^d}$ be a set. We consider a sequence $\left \{ \Lambda_n: n \geq 1 \right \} $ of 
subsets of ${\mathbb{R}^d}$
produced in the following way. For each $n \geq 1$, let $\tau_n: \Lambda \to 
{\mathbb{R}^d}$ be a map and let $\Lambda_n := \tau_n(\Lambda) = \left \{
\tau_n(\lambda): \lambda \in \Lambda \right \} $. We assume that $\tau_n(\lambda) \longrightarrow \lambda$, as $n 
\longrightarrow \infty$, for
all $\lambda \in \Lambda$. The sequence of sets $\left \{ \Lambda_n: n \geq
1 \right \} $ together with the maps $\left \{ \tau_n: n \geq 1 \right \} $
is called a \emph{deformation} of $\Lambda$. We think of each sequence of
points $\left \{ \tau_n(\lambda): n \geq 1 \right \} $ as a (discrete) path
moving towards the endpoint $\lambda$.

We will say that $\left \{ \Lambda_n: n \geq 1 \right \} $ is a deformation
of $\Lambda$, with the understanding that a sequence of underlying maps $\left \{ \tau_n: n \geq 1 \right \} $ is also 
given.

We now describe a special class of deformations.

\begin{definition}
\label{def_lip} A deformation $\left \{ \Lambda_n: n \geq 1 \right \} $ of $\Lambda $ is called \emph{Lipschitz}, 
denoted by $\Lambda_n \xrightarrow{Lip}
\Lambda$, if the following two conditions hold:

\textbf{(L1) } Given $R>0$, 
\begin{align*}
\sup_{ \overset{\lambda, \lambda^{\prime }\in \Lambda}{\left|
\lambda-\lambda^{\prime }\right| \leq R}} \left| (\tau_n(\lambda) -
\tau_n(\lambda^{\prime })) - (\lambda - \lambda^{\prime }) \right|
\rightarrow 0, \quad \mbox {as } n \longrightarrow \infty.
\end{align*}

\textbf{(L2) } Given $R>0$, there exists $R^{\prime }>0$ and $n_0 \in {\ 
\mathbb{N}}$ such that if $\left| \tau_n(\lambda) - \tau_n(\lambda^{\prime
}) \right| \leq R$ for \emph{some} $n \geq n_0$ and some $\lambda,
\lambda^{\prime }\in \Lambda$, then $\left| \lambda-\lambda^{\prime }\right|
\leq R^{\prime }$.
\end{definition}

The following results shows that the frame property of a Gabor system is
stable under Lipschitz deformations.

\begin{theorem}[{\protect\cite[Thm.~7.1 and Rem.~ 7.3]{grorro15}}]
\label{th_def_frames} Let $g \in M^1({\mathbb{R}^d})$ and $\Lambda \subseteq 
{\mathbb{R}^{2d}}$. Assume that  $\mathcal{G}(g,\Lambda)$ is a (Gabor)
frame and that $\Lambda_n \xrightarrow{Lip} \Lambda$.  Then  there exist $A,B>0$
and $n_0 \in {\mathbb{N}}$ such that $\mathcal{G}(g,\Lambda_n)$  is a frame with uniform  bounds $A,B$ for
all $n \geq n_0$. 
\end{theorem}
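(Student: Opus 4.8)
The plan is to establish the two frame inequalities of \eqref{eq_frame} separately. The upper (Bessel) bound is the easy part and is uniform: I would first show that the deformed sets have uniformly bounded relative separation, $\sup_n \rel(\Lambda_n) < \infty$. By Lemma~\ref{lemma_rel_sep} the node set $\Lambda$ is relatively separated. If $K$ points of $\Lambda_n = \tau_n(\Lambda)$ fall in a common unit cube, their mutual distances are at most $\sqrt{d}$, so by \regtwo{} (applied with $R=\sqrt d$) the corresponding points of $\Lambda$ lie within a fixed radius $R'$ of one another for all $n \geq n_0$; relative separation of $\Lambda$ then bounds $K$ independently of $n$ and of the cube. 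Substituting $\sup_n \rel(\Lambda_n) < \infty$ into Proposition~\ref{prop_win_stab} yields a Bessel bound $B$ valid simultaneously for every $\mathcal{G}(g,\Lambda_n)$, which is the right-hand inequality of \eqref{eq_frame}.

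For the uniform lower bound I would argue by contradiction. If it failed, there would be $n_k \to \infty$ and unit vectors $f_k$ with $\sum_{\lambda \in \Lambda} |V_g f_k(\tau_{n_k}(\lambda))|^2 \to 0$, since this sum equals $\|V_g f_k|_{\Lambda_{n_k}}\|_2^2$. The key preliminary move is to recenter in phase space: because $|\langle \rho(z) f, \rho(\mu) g\rangle| = |\langle f, \rho(\mu - z) g\rangle|$, replacing $f_k$ by $\rho(-z_k) f_k$ and $\Lambda_{n_k}$ by $\Lambda_{n_k} - z_k$ leaves the frame sum invariant; choosing $z_k$ at a point carrying a fixed fraction of the $L^2$-mass of $V_g f_k$ forces the recentered functions to retain non-negligible short-time Fourier mass near the origin. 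Using the uniform relative separation from the first step, I would then extract along a subsequence a weak limit $\Omega$ of the node sets $\Lambda_{n_k} - z_k$ (in the local sense $\weakconv$) together with a weak $L^2$-limit $f_\infty$ of the recentered functions; the concentration just arranged guarantees $f_\infty \neq 0$.

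The decisive point is to identify $\Omega$ and transfer the frame property to it. Condition \regone{} implies that, over any bounded region, the recentered deformed set $\Lambda_{n_k} - z_k$ agrees asymptotically with a translate of $\Lambda$ (differences of nearby nodes are preserved in the limit), so $\Omega$ is in fact a weak limit of translates of $\Lambda$. Since each translate $\mathcal{G}(g, \Lambda - w)$ is unitarily equivalent to $\mathcal{G}(g,\Lambda)$ via $\rho(w)$ and hence a frame with the same lower bound $A$, and since for $g \in M^1(\rd)$ the lower frame bound is preserved under weak limits of the node set, $\mathcal{G}(g,\Omega)$ is again a frame with lower bound at least $A$. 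On the other hand, pairing the weakly convergent $\rho(-z_k)f_k$ against the (strongly continuous) shifts $\rho(\mu)g$ for finitely many $\mu \in \Omega$ and letting the truncation grow gives $\sum_{\mu \in \Omega} |V_g f_\infty(\mu)|^2 \leq \liminf_k \|V_g(\rho(-z_k)f_k)|_{\Lambda_{n_k}-z_k}\|_2^2 = 0$ with $f_\infty \neq 0$, contradicting that $\mathcal{G}(g,\Omega)$ is a frame.

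I expect the main obstacle to be precisely the transfer of the lower bound to the weak limit, which rests essentially on $g \in M^1(\rd)$. Concretely, to pass the inequality $A\|f\|_2^2 \le \|V_g f|_{\Lambda-w_k}\|_2^2$ to the limit one splits the sum over a large ball and its complement: the ball part converges to the corresponding sum over $\Omega$, while the tails $\sum_{|\mu| > R} |V_g f(\mu)|^2$ must be shown small \emph{uniformly} in $k$. This uniform homogeneous approximation property is exactly what the Feichtinger-algebra hypothesis provides through Proposition~\ref{prop_win_stab}, combined with the uniform relative separation of the translated sets. Verifying $f_\infty \neq 0$ after recentering and establishing this uniform tail control are the technical heart of the argument; the remaining steps are bookkeeping with the uniform Bessel bound from the first step.
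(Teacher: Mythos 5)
Note first that the paper does not prove this statement at all: it is imported verbatim from \cite[Thm.~7.1 and Rem.~7.3]{grorro15}, so the relevant comparison is with the proof there. Your architecture is in fact the same Beurling-type weak-limit scheme used in \cite{grorro15}: uniform relative separation of the $\Lambda_n$ from \regtwo{} (this is exactly Lemma~\ref{lemma_sep}), a uniform Bessel bound via Proposition~\ref{prop_win_stab}, and a compactness argument in which a failure of the uniform lower bound is recentered in phase space, \regone{}/\regtwo{} are used to identify the limiting node set as a weak limit of translates of $\Lambda$, and a contradiction is reached there. Your first step and your tail-splitting argument for transferring the lower inequality to a weak limit of translates (for each \emph{fixed} $f$, using that $V_g f \in W(C_0,\ell^2)$-type decay plus uniform relative separation kills the tails, with only a possible loss of constant from colliding nodes) are essentially sound.

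The genuine gap is the recentering step, and it is not a technicality but the reason the actual proof cannot stay inside $L^2$. You propose to choose $z_k$ ``at a point carrying a fixed fraction of the $L^2$-mass of $V_g f_k$,'' but no such point, and no ball $B(z_k,R)$ of fixed radius carrying a fixed fraction of $\int |V_g f_k|^2$, need exist: with $\lVert f_k \rVert_2 = 1$ the STFT mass can spread arbitrarily thin (e.g.\ for dilates $f_k(x) = k^{-d/2} f(x/k)$ one has $\lVert V_g f_k \rVert_\infty \to 0$ while $\lVert V_g f_k \rVert_2 = 1$). In that regime every recentered sequence $\rho(-z_k) f_k$ converges weakly to $f_\infty = 0$, and your final contradiction evaporates. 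The proof in \cite{grorro15} handles this by leaving $L^2$: it rests on a characterization of Gabor frames ``without inequalities,'' namely that for $g \in M^1(\Rdst)$ the frame property of $\mathcal{G}(g,\Lambda)$ is equivalent to relative separation together with an $M^\infty$-\emph{uniqueness} property of \emph{every} weak limit $\Gamma$ of translates $\Lambda - z_n$ (no nonzero $f \in M^\infty$ with $V_g f \equiv 0$ on $\Gamma$). One then normalizes $\lVert V_g f_k \rVert_\infty = 1$, recenters at near-maxima of $|V_g f_k|$ so that equicontinuity of the STFT forces the weak-$*$ limit in $M^\infty$ to be nonzero, and contradicts the uniqueness property. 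The equivalence itself is the deep ingredient you are missing: its proof uses the localization/spectral-invariance theory for Gabor frame operators with $M^1$ windows (Wiener-lemma-type results of Sj\"ostrand and Gr\"ochenig--Leinert), which lets one pass between invertibility on $L^2$ and behavior on $M^\infty$. Without some substitute for this machinery, the contradiction argument as you set it up does not close.
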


We will also need the following technical lemma concerning Lipschitz
convergence and relative separation.

\begin{lemma}[{\protect\cite[Lemma 6.7]{grorro15}}]
\label{lemma_sep} Let $\Lambda_n \xrightarrow{Lip} \Lambda$ and assume that $\Lambda$ is relatively separated. Then 
$\limsup_n \mathop{\mathrm{rel}}
(\Lambda_n) < \infty$.
\end{lemma}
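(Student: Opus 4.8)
The plan is to bound the number of points of $\Lambda_n$ in an arbitrary unit cube by pulling this count back, via \regtwo{}, to a count of points of the fixed set $\Lambda$ inside a ball of bounded radius, and then to invoke the relative separation of $\Lambda$. Notably, condition \regone{} plays no role here; \regtwo{} alone does the work. To set up, I would fix $x \in \Rdst$, write $Q := \sett{x} + [0,1]^d$ for the corresponding closed unit cube (of diameter $\sqrt d$), and observe that since every point of $\Lambda_n \cap Q$ has the form $\mapn{\lambda}$ with $\lambda \in \Lambda$,
\[
\#\bigl(\Lambda_n \cap Q\bigr) \;\le\; \#\,\set{\lambda \in \Lambda}{\mapn{\lambda} \in Q},
\]
where the inequality allows for the maps $\tau_n$ to be non-injective. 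It then suffices to bound the right-hand side uniformly over $x$ and over all large $n$.

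Next I would invoke \regtwo{} with $R := \sqrt d$: this yields $R' > 0$ and $n_0 \in \bN$ such that, for every $n \ge n_0$, the inequality $\abs{\mapn{\lambda} - \mapn{\lambda'}} \le \sqrt d$ implies $\abs{\lambda - \lambda'} \le R'$. Fix $n \ge n_0$ and set $S_n := \set{\lambda \in \Lambda}{\mapn{\lambda} \in Q}$. If $S_n$ is empty there is nothing to do; otherwise choose $\lambda_0 \in S_n$. Because $\operatorname{diam}(Q) = \sqrt d$, every $\lambda \in S_n$ satisfies $\abs{\mapn{\lambda} - \mapn{\lambda_0}} \le \sqrt d$, and hence $\abs{\lambda - \lambda_0} \le R'$ by \regtwo{}. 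Thus $S_n$ is contained in the closed ball of radius $R'$ about $\lambda_0$.

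To finish, I would cover any ball of radius $R'$ in $\Rdst$ by a number $N = N(R',d)$ of unit cubes, a number that depends only on $R'$ and $d$ and not on the center (by translation invariance). Since each unit cube contains at most $\rel(\Lambda)$ points of $\Lambda$, we get $\#(S_n) \le N\,\rel(\Lambda)$, and combining with the first step gives $\#(\Lambda_n \cap Q) \le N\,\rel(\Lambda)$ for every $x \in \Rdst$ and every $n \ge n_0$. Taking the supremum over $x$ yields $\rel(\Lambda_n) \le N\,\rel(\Lambda)$ for all $n \ge n_0$, so $\limsup_n \rel(\Lambda_n) \le N\,\rel(\Lambda) < \infty$. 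I do not expect a genuine obstacle: the argument is a single application of \regtwo{} followed by a covering count, and the only points demanding mild care are the non-injectivity of $\tau_n$ (absorbed into the first inequality) and the center-independence of the covering number $N$.
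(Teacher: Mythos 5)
Your proof is correct. Note that the paper under review does not actually prove this lemma---it imports it verbatim from \cite[Lemma 6.7]{grorro15}---but your argument is the natural (and essentially the same) one: use \regtwo{} alone to pull the count of points of $\Lambda_n$ in a unit cube back to a count of points of $\Lambda$ in a ball of radius $R'$, cover that ball by a center-independent number of unit cubes, and invoke $\rel(\Lambda)<\infty$; your handling of the possible non-injectivity of $\tau_n$ (bounding $\#(\Lambda_n\cap Q)$ by the number of preimages) and the restriction to $n\ge n_0$ (which is all the $\limsup$ requires) are exactly the right points of care.
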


The following corollary enables us to combine the stability of Gabor frames
under deformations of $\Lambda$ with small perturbations of the window $g$
on $M^1$-norm.

\begin{coro}
\label{coro_stabil} Assume that $g_n \longrightarrow g$ in $M^1({\mathbb{R}^d })$ and that $\Lambda_n \xrightarrow{Lip} 
\Lambda$. Then $\mathcal{G}
(g_n,\Lambda_n)$ is a frame for all sufficiently large $n$. (Moreover, the
corresponding frame bounds can be taken to be uniform in $n$).
\end{coro}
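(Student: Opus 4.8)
The plan is to decouple the two perturbations---the deformation of the nodes $\Lambda \mapsto \Lambda_n$ and the change of window $g \mapsto g_n$---and to handle them in succession. First I would freeze the window at $g$ and invoke Theorem~\ref{th_def_frames}: since $\mathcal{G}(g,\Lambda)$ is a frame (the standing hypothesis inherited from that theorem) and $\Lambda_n \lipconv \Lambda$, there are constants $A,B>0$ and an index $n_0$ so that for all $n\ge n_0$ the system $\mathcal{G}(g,\Lambda_n)$ is a frame with the \emph{same} bounds $A,B$. Writing the frame inequality for $\mathcal{G}(g,\Lambda_n)$ in the form $\sqrt{A}\,\norm{f}_2 \le \norm{V_g f|_{\Lambda_n}}_2 \le \sqrt{B}\,\norm{f}_2$ (using the identity noted after \eqref{eq_frame}), it then remains only to swap $g$ for $g_n$ inside this estimate.

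For the window swap I would estimate the $\ell^2$-difference of the two sampled short-time Fourier transforms. Since the short-time Fourier transform is additive in the window, $V_g f - V_{g_n} f = V_{g-g_n} f$, so Proposition~\ref{prop_win_stab} applied with the window $g-g_n\in M^1$ gives
\begin{align*}
\norm{V_g f|_{\Lambda_n} - V_{g_n} f|_{\Lambda_n}}_2 = \norm{V_{g-g_n} f|_{\Lambda_n}}_2 \le C\,\rel(\Lambda_n)\,\norm{g-g_n}_{M^1}\,\norm{f}_2 .
\end{align*}
The crucial point---and what I expect to be the only real obstacle---is that this error must be made small \emph{uniformly in $n$}: a priori the factor $\rel(\Lambda_n)$ could blow up. This is exactly where Lemma~\ref{lemma_sep} enters. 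Because $\mathcal{G}(g,\Lambda)$ is a frame, $\Lambda$ is relatively separated by Lemma~\ref{lemma_rel_sep}, and combined with $\Lambda_n\lipconv\Lambda$ this yields $R_0 := \limsup_n \rel(\Lambda_n) < \infty$. Hence for all large $n$ the error is dominated by $C R_0\,\norm{g-g_n}_{M^1}\,\norm{f}_2$, which tends to $0$ uniformly over the unit ball $\norm{f}_2 \le 1$ since $g_n \to g$ in $M^1$.

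To conclude I would run a standard triangle-inequality perturbation argument. Choosing $n_1 \ge n_0$ so large that $C R_0\,\norm{g-g_n}_{M^1} \le \tfrac12 \sqrt{A}$ for all $n \ge n_1$, the reverse triangle inequality gives $\norm{V_{g_n} f|_{\Lambda_n}}_2 \ge \big(\sqrt{A} - C R_0\,\norm{g-g_n}_{M^1}\big)\norm{f}_2 \ge \tfrac12\sqrt{A}\,\norm{f}_2$, i.e.\ a uniform lower frame bound $A/4$; likewise $\norm{V_{g_n}f|_{\Lambda_n}}_2 \le \big(\sqrt{B}+\tfrac12\sqrt{A}\big)\norm{f}_2$ supplies a uniform upper bound. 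Thus $\mathcal{G}(g_n,\Lambda_n)$ is a frame for every $n\ge n_1$ with bounds independent of $n$, which is the assertion. The argument is entirely soft once the uniform relative-separation bound from Lemma~\ref{lemma_sep} is in hand; no quantitative rate of convergence is required.
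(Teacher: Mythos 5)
Your proof is correct and follows essentially the same route as the paper: Theorem~\ref{th_def_frames} for uniform bounds on $\mathcal{G}(g,\Lambda_n)$, Proposition~\ref{prop_win_stab} applied to the window $g-g_n$, Lemma~\ref{lemma_sep} (with Lemma~\ref{lemma_rel_sep} supplying the relative separation of $\Lambda$) to control $\rel(\Lambda_n)$ uniformly, and a triangle-inequality perturbation at the end. The only cosmetic difference is bookkeeping: the paper defines perturbed bounds $A_n, B_n$ and lets them converge, whereas you fix a threshold via $R_0 = \limsup_n \rel(\Lambda_n)$ (strictly speaking you should bound $\rel(\Lambda_n) \le R_0 + 1$ for large $n$, since the $\limsup$ need not dominate every tail term).
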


\begin{proof}
By Theorem \ref{th_def_frames}, there exist $A,B>0$ and $n_0 \in {\mathbb{N}}
$ such that for all $n \geq n_0$ 
\begin{align}  \label{eq_AB}
A \lVert f\rVert_2 \leq \lVert V_g f| \Lambda_n\rVert_2 \leq B \lVert
f\rVert_2, \qquad f \in {L^2({\mathbb{R}^d})}.
\end{align}
(Here $A,B$ are the square roots of the frame bounds.) By Lemma \ref{lemma_rel_sep}, $\Lambda$ and all $\Lambda_n$ with 
$n\gg 0$ are relatively
separated. Using Proposition \ref{prop_win_stab} we deduce that for all $f
\in {L^2({\mathbb{R}^d})}$ 
\begin{align*}
&\Big| \| V_g f| \Lambda_n\|_2- \| V_{g_n} f| \Lambda_n\|_2
\Big| \leq \lVert V_{g-g_n} f| \Lambda_n\rVert_2 \\
&\qquad \leq C \lVert g-g_n\rVert_{M^1} \mathop{\mathrm{rel}}(\Lambda_n)
\lVert f\rVert_2.
\end{align*}
Letting 
\begin{align*}
&A_n := A - C \lVert g-g_n\rVert_{M^1} \mathop{\mathrm{rel}}(\Lambda_n), \\
&B_n := B + C \lVert g-g_n\rVert_{M^1} \mathop{\mathrm{rel}}(\Lambda_n),
\end{align*}
we deduce from \eqref{eq_AB} and the triangle inequality that 
\begin{align}  \label{eq_ABn}
A_n \lVert f\rVert_2 \leq \lVert V_{g_n} f| \Lambda_n\rVert_2 \leq B_n
\lVert f\rVert_2, \qquad f \in {L^2({\mathbb{R}^d})}.
\end{align}
By Lemma \ref{lemma_sep} and the fact that $g_n \longrightarrow g$ in $M^1$
it follows that $A_n \longrightarrow A$ and $B_n \longrightarrow B$.
Combining this with \eqref{eq_ABn} we conclude that for all sufficiently
large $n$ 
\begin{align*}
A/2 \lVert f\rVert_2 \leq \lVert V_{g_n} f| \Lambda_n\rVert_2 \leq B/2
\lVert f\rVert_2, \qquad f \in {L^2({\mathbb{R}^d})}.
\end{align*}
Hence, for $n \gg 1$, $\mathcal{G}(g_n,\Lambda_n)$ is a frame with bounds $A^2/4, B^2/4$.
\end{proof}

\subsection{Flows and Lipschitz convergence}

A function $F: {\mathbb{R}} \times {\mathbb{R}^d} \to {\mathbb{R}^d}$ is
Lip\-schitz in the second variable if there exists $L>0$ such that 
\begin{align*}
\left| F(t,x)-F(t,y) \right| \leq L \left| x-y \right| , \mbox{ for all }
(t,x) \in {\mathbb{R}}\times{\mathbb{R}^d}.
\end{align*}
Under this assumption, we let $(\Phi_t)_{t\in {\mathbb{R}}} $ denote the
flow of $F$ (associated with time 0). This means that for each $x \in {\mathbb{R}^d}$, ${\mathbb{R}} \ni t \mapsto 
\Phi_t(x) \in {\mathbb{R}^d}$ is
a $C^1$ function and that

\begin{itemize}
\item[(a)] $\Phi_0(x)=x$,

\item[(b)] $\frac{d}{dt}\Phi_t(x)=F(t,\Phi_t(x))$.
\end{itemize}

The theory of ODEs implies that the flow exists and it is uniquely
determined by properties $(a)$ and $(b)$ above. Moreover, the flow satisfies
the following distortion estimate: given $T>0$, there exist constants $c_t,
C_T>0$ such that 
\begin{align}  \label{eq_dist}
c_T \left| x-y \right|   \leq \left| \Phi_t(x)-\Phi _t(y) \right|   \leq C_T
\left| x-y \right|  , \qquad x,y \in {\mathbb{R}}, \quad t \in [-T,T].
\end{align}
The previous estimate is normally proved using the following useful lemma.

\begin{lemma}[Gronwall]
\label{lemma_gron} Let $I \subseteq {\mathbb{R}}$ be an interval and $a \in I
$. Let $g: I \to [0,+\infty)$ be a continuous function that satisfies 
\begin{align*}
g(t) \leq A + B\left| \int_a^t g(s) ds \right| , \qquad t \in I,
\end{align*}
for some constants $A,B \in {\mathbb{R}}$. Then 
\begin{align*}
g(t) \leq Ae^{B\left| t-a \right| }, \qquad t \in I.
\end{align*}
(The reason for the absolute value outside the integral is that $t-a$ can be
negative.)
\end{lemma}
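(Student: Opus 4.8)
The plan is to reduce to the one-sided situation $t \geq a$ and then solve the resulting differential inequality by the classical integrating-factor trick. First I note that the statement is to be read with $B \geq 0$: evaluating the hypothesis at $t=a$ gives $0 \leq g(a) \leq A$, so $A \geq 0$ is automatic, whereas for $B<0$ the asserted bound $Ae^{B\abs{t-a}}$ is strictly smaller than $A$ and cannot hold in general (a function that vanishes near $a$ and only later becomes positive satisfies the hypothesis yet eventually exceeds $Ae^{B(t-a)}$). In the application to the distortion estimate \eqref{eq_dist} the constant $B$ arises as a Lipschitz constant and is therefore nonnegative, so I will prove the lemma for $B \geq 0$, which is the relevant case.

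\textbf{The case $t \geq a$.} Since $\int_a^t g(s)\,ds \geq 0$ here, the absolute value can be dropped and the hypothesis reads $g(t) \leq A + B\int_a^t g(s)\,ds$. I would set $u(t) := \int_a^t g(s)\,ds$. Because $g$ is continuous, $u$ is $C^1$ with $u(a)=0$ and $u'(t)=g(t)$, so the hypothesis becomes the differential inequality $u'(t) \leq A + B\,u(t)$. Multiplying by the positive integrating factor $e^{-B(t-a)}$ gives
\[
\frac{d}{dt}\big( e^{-B(t-a)} u(t)\big) = e^{-B(t-a)}\big(u'(t) - B\,u(t)\big) \leq A\, e^{-B(t-a)} .
\]
Integrating from $a$ to $t$ and using $u(a)=0$ yields, for $B>0$, the bound $u(t) \leq \tfrac{A}{B}\big(e^{B(t-a)} - 1\big)$; the case $B=0$ is immediate, since then the hypothesis is simply $g(t) \leq A = Ae^{0}$. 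Substituting this estimate for $u(t)$ back into $g(t) \leq A + B\,u(t)$ and using $B>0$ gives $g(t) \leq A + A\big(e^{B(t-a)} - 1\big) = A\, e^{B(t-a)} = A\,e^{B\abs{t-a}}$, as claimed.

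\textbf{The case $t \leq a$ and conclusion.} Here I reduce to the previous case by reflection. For $t \leq a$ I put $s := a-t \geq 0$ and $\tilde g(s) := g(a-s)$; a change of variables gives $\int_0^s \tilde g(r)\,dr = \int_t^a g(\sigma)\,d\sigma = \abs{\int_a^t g(\sigma)\,d\sigma}$, so the hypothesis transforms into $\tilde g(s) \leq A + B\int_0^s \tilde g(r)\,dr$, which is exactly the one-sided hypothesis for $\tilde g$ with base point $0$. The case already treated gives $\tilde g(s) \leq A\,e^{Bs}$, that is $g(t) \leq A\,e^{B(a-t)} = A\,e^{B\abs{t-a}}$. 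Combining the two cases proves the lemma. The argument is entirely classical ordinary-differential-equations bookkeeping, so there is no deep obstacle; the only points requiring care are the sign of $B$ and the handling of the absolute values, which is why I organize the proof around the antiderivative substitution $u(t)=\int_a^t g$ and the reflection $t \mapsto a-t$. As a cross-check one may instead iterate the integral inequality: feeding the bound into itself $n$ times produces the partial sums of $A\sum_k (B(t-a))^k/k!$ plus a remainder controlled by $\big(\sup_{[a,t]} g\big)\,(B(t-a))^n/n! \to 0$, which again returns $Ae^{B(t-a)}$.
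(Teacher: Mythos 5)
Your proof is correct, and there is in fact nothing in the paper to compare it against: Lemma \ref{lemma_gron} is stated without proof, quoted as the classical fact that underlies the distortion estimate \eqref{eq_dist}. Your argument --- substituting the antiderivative $u(t)=\int_a^t g(s)\,ds$, turning the integral inequality into $u'\leq A+Bu$, applying the integrating factor $e^{-B(t-a)}$, and handling $t\leq a$ by the reflection $\tilde g(s)=g(a-s)$ --- is the standard textbook route, and you carry it out without gaps; the reflection step correctly uses $g\geq 0$ to identify $\abs{\int_a^t g(\sigma)\,d\sigma}$ with the one-sided integral $\int_0^s\tilde g(r)\,dr$, which is the only place the nonnegativity of $g$ is needed. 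One point of your write-up deserves emphasis: your observation that the lemma as literally stated, with $B\in{\mathbb{R}}$ arbitrary, is \emph{false} for $B<0$ is accurate, and your counterexample sketch (a nonnegative $g$ vanishing near $a$ and becoming positive later satisfies the hypothesis with small enough mass, yet violates the exponentially decaying bound $Ae^{B\abs{t-a}}$) is valid. So your restriction to $B\geq 0$ is not a weakening but a necessary correction of an imprecision in the statement, and it is harmless for the paper, since in the proof of Theorem \ref{th_lip_flow} the lemma is invoked with $B=L$, the Lipschitz constant of $F$, which is nonnegative by definition.
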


We now show that the flows of ODEs provide examples of Lipschitz
deformations.

\begin{theorem}
\label{th_lip_flow} Let $F: {\mathbb{R}} \times {\mathbb{R}^d} \to {\mathbb{R }^d}$ be Lipschitz in the second variable 
and let $(\Phi_t)_{t\in {\mathbb{R} }}$ be the corresponding flow. Let $\Lambda \subseteq {\mathbb{R}^d}$ be a
relatively separated set. Then $\Phi_t(\Lambda) \xrightarrow{Lip} \Lambda$,
as $t \longrightarrow 0$. \\
(More precisely, for each sequence $t_n \longrightarrow 0$, $\Phi_{t_n}(\Lambda) \xrightarrow{Lip} \Lambda$.)
\end{theorem}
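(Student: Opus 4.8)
The plan is to verify directly the two defining conditions \regone{} and \regtwo{} of a Lipschitz deformation, with the maps taken to be $\tau_n := \Phi_{t_n}$ restricted to $\Lambda$. The preliminary requirement that $\tau_n(\lambda) \to \lambda$ for each fixed $\lambda$ is immediate: since $t \mapsto \Phi_t(\lambda)$ is $C^1$ with $\Phi_0(\lambda)=\lambda$, continuity at $t=0$ gives $\Phi_{t_n}(\lambda) \to \lambda$ whenever $t_n \to 0$. Throughout I would fix some $T>0$ and retain only the (cofinitely many) indices $n$ with $|t_n|\le T$; the constants $c_T, C_T$ from the distortion estimate \eqref{eq_dist} — which itself follows from Gronwall's inequality, Lemma~\ref{lemma_gron} — will be the only quantitative input, and the decisive feature is that they are uniform in the base points.

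The heart of the argument is the integral form of the flow, $\Phi_t(x) = x + \int_0^t F(s,\Phi_s(x))\,ds$, valid because $\tfrac{d}{dt}\Phi_t(x)=F(t,\Phi_t(x))$ is continuous in $s$. Subtracting the same identity at $\lambda'$ yields
\begin{equation*}
(\Phi_t(\lambda)-\Phi_t(\lambda'))-(\lambda-\lambda') = \int_0^t \big( F(s,\Phi_s(\lambda))-F(s,\Phi_s(\lambda')) \big)\,ds .
\end{equation*}
For \regone{} I would bound the right-hand side using the Lipschitz constant $L$ of $F$ in the second variable and then the upper distortion estimate $\left|\Phi_s(\lambda)-\Phi_s(\lambda')\right|\le C_T\left|\lambda-\lambda'\right|$ for $|s|\le T$; when $\left|\lambda-\lambda'\right|\le R$ this gives a bound of the shape $L\,C_T\,R\,|t|$. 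Taking the supremum over all pairs with $\left|\lambda-\lambda'\right|\le R$ and letting $n\to\infty$ (so $t_n\to 0$) forces the supremum in \regone{} to zero. I want to stress that this estimate does not depend on the individual pair $(\lambda,\lambda')$, so the uniformity demanded by \regone{} is inherited directly from the uniformity of $C_T$.

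For \regtwo{} I would invoke the lower half of \eqref{eq_dist}, namely $c_T\left|\lambda-\lambda'\right|\le\left|\Phi_t(\lambda)-\Phi_t(\lambda')\right|$ for $|t|\le T$: if $\left|\Phi_{t_n}(\lambda)-\Phi_{t_n}(\lambda')\right|\le R$ for some admissible $n$, then $\left|\lambda-\lambda'\right|\le R/c_T$, so $R':=R/c_T$ together with any $n_0$ beyond which $|t_n|\le T$ settles the condition. The one point that deserves care — and what I regard as the only (mild) obstacle — is that \regone{} requires control of the deformation error \emph{uniformly} over the infinitely many close pairs in $\Lambda$; the integral representation dispatches this cleanly precisely because \eqref{eq_dist} already encodes the requisite uniformity in the base points. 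I note finally that relative separation of $\Lambda$ is not needed to establish \regone{} and \regtwo{} as such; it is the natural standing hypothesis for the deformation framework and, through Lemma~\ref{lemma_sep}, ensures that the images $\Phi_{t_n}(\Lambda)$ stay relatively separated.
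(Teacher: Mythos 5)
Your proof is correct and shares the paper's overall skeleton: the pointwise convergence $\Phi_{t_n}(\lambda)\to\lambda$, the integral representation $\Phi_t(x)=x+\int_0^t F(s,\Phi_s(x))\,ds$, the Lipschitz bound on $F$, and the use of the lower half of \eqref{eq_dist} for (L2) (with $R'=R/c_T$, exactly as the paper takes $R'=CR$ on $t\in[-1,1]$). The genuine difference is how (L1) is closed. The paper splits $\left|\Phi_s(\lambda)-\Phi_s(\lambda')\right|\le \left|\Phi_s(\lambda)-\Phi_s(\lambda')-(\lambda-\lambda')\right|+\left|\lambda-\lambda'\right|$ inside the integral and applies Gronwall's Lemma~\ref{lemma_gron} directly to the deviation $g(t)=\left|\Phi_t(\lambda)-\Phi_t(\lambda')-(\lambda-\lambda')\right|$, arriving at the bound $L\left|t\right|\left|\lambda-\lambda'\right|e^{L\left|t\right|}$; you instead insert the upper half of the distortion estimate \eqref{eq_dist} into the integral, arriving at $LC_T\left|t\right|\left|\lambda-\lambda'\right|$. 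Both bounds are uniform over pairs with $\left|\lambda-\lambda'\right|\le R$ and are $O(\left|t\right|)$, so both yield (L1). What each buys: your route is shorter, but it outsources the Gronwall work to \eqref{eq_dist}, which the paper states without proof (remarking only that it is ``normally proved using'' Lemma~\ref{lemma_gron}); the paper's route keeps (L1) self-contained, needing Gronwall itself rather than the two-sided distortion estimate, and reserves \eqref{eq_dist} for (L2) alone. Since the paper treats \eqref{eq_dist} as known in any case, your reliance on it is legitimate. Your closing observation is also accurate: relative separation of $\Lambda$ plays no role in verifying (L1) and (L2); it is a standing hypothesis of the deformation framework, used downstream via Lemma~\ref{lemma_sep}.
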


\begin{proof}
Let $L>0$ be the Lipschitz constant of $F$ (in the second variable). We
first check condition $(L1)$ from Definition \ref{def_lip}. From the
definition of the flow it follows that 
\begin{align*}
\Phi_t(x) = x + \int_0^t F(s,\Phi_s(x)) ds,\qquad t \in {\mathbb{R}}.
\end{align*}
Therefore, 
\begin{align}
\Phi_t(\lambda)-\Phi_t(\lambda^{\prime })-(\lambda - \lambda^{\prime }) &=
\int_0^t \left( F(s,\Phi_s(\lambda)) - F(s,\Phi_s(\lambda^{\prime }))
\right) ds.
\end{align}
As a consequence, 
\begin{align*}
&\left| \Phi_t(\lambda)-\Phi_t(\lambda^{\prime })-(\lambda - \lambda^{\prime
}) \right| \leq \left| \int_0^t \left| F(s,\Phi_s(\lambda)) -
F(s,\Phi_s(\lambda^{\prime })) \right| ds \right| \\
&\qquad \leq \left| \int_0^t L \left| \Phi_s(\lambda)-\Phi_s(\lambda^{\prime
}) \right| ds \right| \\
&\qquad \leq L \left| \int_0^t \left| \Phi_t(\lambda)-\Phi_t(\lambda^{\prime
})-(\lambda - \lambda^{\prime }) \right| ds \right| + L\left| t \right|
\left| \lambda-\lambda^{\prime }\right| .
\end{align*}

Applying Gronwall's Lemma \ref{lemma_gron} to $g(t):=\left\vert \Phi
_{t}(\lambda )-\Phi _{t}(\lambda ^{\prime })-(\lambda -\lambda ^{\prime
})\right\vert $ we deduce that 
\begin{equation*}
\left\vert \Phi _{t}(\lambda )-\Phi _{t}(\lambda ^{\prime })-(\lambda
-\lambda ^{\prime })\right\vert \leq L\left\vert t\right\vert \left\vert
\lambda -\lambda ^{\prime }\right\vert e^{L\left\vert t\right\vert }.
\end{equation*}
Condition $(L1)$ follows from here.

To check condition $(L2)$, we consider only $t \in [-1,1]$ and \eqref{eq_dist} to obtain a constant $C$ such that 
\begin{align*}
C^{-1} \left| x-y \right| \leq \left| \Phi_t(x)-\Phi_t(y) \right| \leq C
\left| x-y \right| , \qquad t \in (-1,1).
\end{align*}
Hence, if for some instant $t_0$ we know that $\left|
\Phi_{t_0}(\lambda)-\Phi_{t_0}(\lambda^{\prime }) \right| \leq R$, then we
can deduce that $\left| \lambda-\lambda^{\prime }\right| \leq R^{\prime
}:=CR $.

This completes the proof.
\end{proof}

\section{Hamiltonian deformations: d\'{e}nouement}

\label{sec_de} We finally combine all  tools from the previous section
and prove the main result.

\begin{proof}[Proof of Theorem \protect\ref{th_small_time}]
Let us define $F: {\mathbb{R}} \times {\mathbb{R}^{2d}} \to {\mathbb{R}^{2d}}
$ by 
\begin{align*}
F(t,x,p):=(\partial_p \mathcal{H}(x,p), -\partial_x \mathcal{H}(x,p)).
\end{align*}
Then $F$ is a $C^1$ function with bounded derivatives  and,
consequently, $F$ is 
Lipschitz in the second set of  variables $(x,p)$. Let $t_n \longrightarrow 0$ and define $\Lambda_n := 
\Phi_{t_n}(\Lambda)$. Theorem \ref{th_lip_flow} implies that $\Lambda_n \xrightarrow{Lip} \Lambda$, while Theorem 
\ref{th_ham_win} implies
that $e^{-it_nH}g \longrightarrow g$ in $M^1$. Hence, Corollary \ref{coro_stabil} yields the desired conclusion.
\end{proof}

\begin{rem}
 The proof shows that, under  the conditions of Theorem
 \ref{th_small_time}, the 
Gabor systems $\mathcal{G}(e^{-ith}g, \Phi_t(\Lambda))$ admit uniform frame
bounds for $t \in [-t_0,t_0]$.
\end{rem}

\begin{rem}
We do not know whether the conclusion of Theorem \ref{th_small_time} remains
valid for arbitrary times. Moreover, we do not know of any example of a
Hamiltonian deformation that does not preserve the frame
property.
\end{rem}

\end{document}